\newcommand{\num}[1]{\numprint{#1}}
\newtheorem{theorem}{Theorem}
\newtheorem{corollary}{Corollary}
\newtheorem{lemma}{Lemma}
\begin{document}

%\begin{frontmatter}

%\begin{fmbox}
%\dochead{Research}

\title{Estimation of the True Evolutionary Distance\\under the Fragile Breakage Model\thanks{The work is supported by the National Science Foundation under the grant No. IIS-1462107.}}

\author{Nikita Alexeev\qquad and\qquad Max A. Alekseyev}

\date{\small The George Washington University\\Washington, DC, USA}

\maketitle

%\end{fmbox}% comment this for two column layout

%\begin{abstractbox}

\begin{abstract} % abstract
\textbf{Background:} The ability to estimate the evolutionary distance between extant genomes plays a crucial role in many phylogenomic studies. Often such estimation is based on the parsimony assumption, implying that the distance between two genomes can be estimated as the \emph{rearrangement distance} equal the minimal number of genome rearrangements required to transform one genome into the other. However, in reality the parsimony assumption may not always hold, emphasizing the need for estimation that does not rely on the rearrangement distance. 
The distance that accounts for the actual (rather than minimal) number of rearrangements between two genomes is often referred to as the \emph{true evolutionary distance}.
 While there exists a method for the true evolutionary distance estimation, it however assumes that genomes can be broken by rearrangements equally likely at any position in the course of evolution. This assumption, known as the \emph{random breakage model}, has recently been refuted in favor of the more rigorous \emph{fragile breakage model} postulating that only certain ``fragile'' genomic regions are prone to rearrangements. 

\textbf{Results:}
We propose a new method for estimating the true evolutionary distance between two genomes under the fragile breakage model. We evaluate the proposed method on simulated genomes, which show its high accuracy. We further apply the proposed method for estimation of evolutionary distances within a set of five yeast genomes and a set of two fish genomes.

\textbf{Conclusions:} 
The true evolutionary distances between the five yeast genomes estimated with the proposed method reveals that some pairs of yeast genomes violate the parsimony assumption. The proposed method further demonstrates that the rearrangement distance between the two fish genomes underestimates their evolutionary distance by about $20\%$. These results demonstrate how drastically the two distances can differ and justify the use of true evolutionary distance in phylogenomic studies.

\end{abstract}

%\begin{keyword}
%\kwd{genome rearrangements}
%\kwd{evolutionary distance}
%\kwd{chromosomal evolution}
%\kwd{DCJ}
%\kwd{fragile breakage model}
%\kwd{random breakage model}
%\end{keyword}

%\end{abstractbox}

%\end{frontmatter}

\section*{Background}
Genome rearrangements are evolutionary events that shuffle genomic architectures. Most frequent genome rearrangements are \emph{reversals} (that flip segments of
a chromosome), \emph{translocations} (that exchange segments of two chromosomes), \emph{fusions} (that merge two chromosomes into one), and \emph{fissions} (that split a single chromosome into two). 
These four types of rearrangements can be modeled by Double-Cut-and-Join (DCJ) operations~\cite{yancopoulos2005},
which break a genome at two positions and glue the resulting fragments in a new order.

The ability to estimate the evolutionary distance between extant genomes plays a crucial role in many phylogenomic studies. Often such estimation is based on the parsimony assumption, implying that the distance between two genomes can be estimated as the \emph{rearrangement distance} equal the minimal number of genome rearrangements required to transform one genome into the other. However, in reality the parsimony assumption may not always hold, emphasizing the need for estimation that does not rely on the (minimal) rearrangement distance. The evolutionary distance that accounts for the actual (rather than minimal) number of rearrangements between two genomes is often referred to as the \emph{true evolutionary distance}.

While there exists a method for estimation of the true evolutionary distance~\cite{Lin08}, it however implicitly assumes that genomes can be broken by rearrangements equally likely at any position in the course of evolution. This assumption, known as the \emph{random breakage model} (RBM) of chromosome evolution~\cite{Ohno70,Nadeau84}, has been refuted in favor of the more rigorous \emph{fragile breakage model}
(FBM)~\cite{PevznerTeslerPNAS03} postulating that only certain ``fragile'' genomic regions are prone to rearrangements. The FBM is supported by many recent studies of various genomes (e.g., see references in \cite{Alekseyev10b}). The RBM can be viewed as an extremal case of the FBM, where every genomic region is fragile.

In the current study, we propose a new method for estimating the evolutionary distance between two genomes with high accuracy under the FBM. We assume that the given genomes are represented as sequences of the same blocks (synteny blocks or orthologous genes) and evolved from a common ancestral genome with a number of DCJs. 
Our method estimates the total number of DCJs on the evolutionary path between such genomes. The results of using our method on a simulated dataset show a high level of precision. We also analyze yeast genome data and show that some, but not all pairs of yeast genomes fall under the parsimony assumption.  

The subtle difference between the RBM and FBM from the perspective of true evolutionary distance estimation is the (in)ability to count the number of fragile genomic regions. 
While \emph{breakpoints} (block adjacencies present in one genome and absent in the other) definitely represent fragile regions, 
the shared block adjacencies are treated differently under the two models. 
Namely, under the RBM a shared block adjacency still represents a fragile region, which just happened to remain conserved across the two genomes by chance. Under the FBM, a shared block adjacency may or may not be fragile.

\section*{Methods}\label{sec:backgroud}

\subsection*{Breakpoint Graphs and DCJs}
We start our analysis with circular genomes (i.e., genomes with circular chromosomes) and address linear genomes later.
We represent a genome with $n$ blocks as a \emph{genome graph} composed of $n$ directed \emph{block edges} encoding blocks and their strands, and $n$ undirected \emph{adjacency edges} encoding adjacencies between blocks.

\begin{figure}[!t]
 \centering\includegraphics[width=0.6\textwidth]{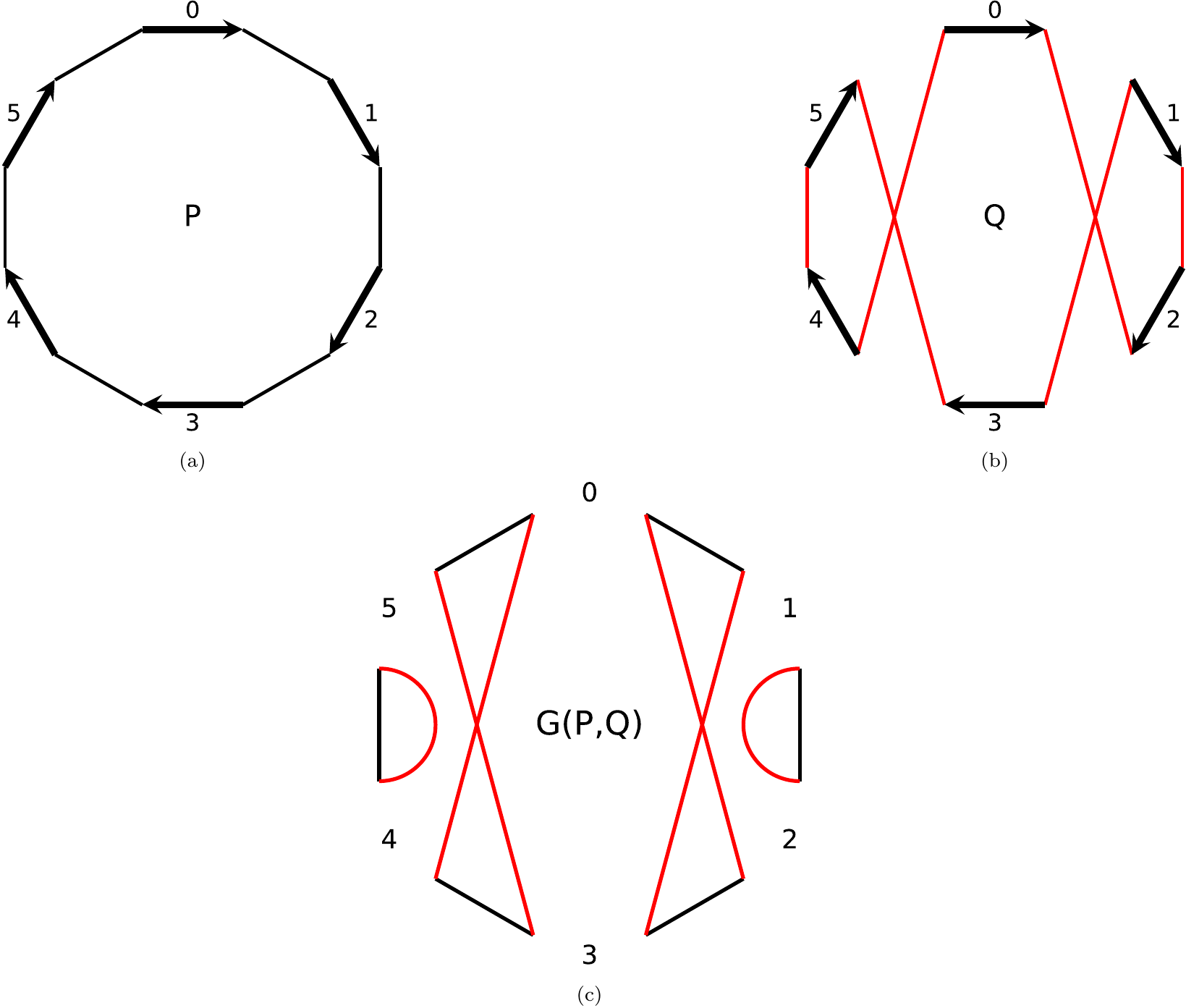}
  \caption{Genome graph of unichromosomal genome $P=(0,1,2,3,4,5)$ with adjacency edges colored black, genome graph of unichromosomal genome $Q = (0,-2,-1,3,-5,-4)$ with adjacency edges colored red,
and their breakpoint graph $G(P,Q)$ of genomes $P$ and $Q$ represents a collection of black-red cycles.}
  \label{fig:genbpg}
\end{figure}

Let $P$ and $Q$ be genomes on the same set of blocks. We assume that in their genome graphs, the adjacency edges of $P$ are colored black (Fig.~\ref{fig:genbpg}a) and the adjacency edges of $Q$ are colored red (Fig.~\ref{fig:genbpg}b).
The \emph{breakpoint graph} $G(P, Q)$ is the superposition of the genome graphs of $P$ and $Q$ with the block edges removed (Fig.~\ref{fig:genbpg}c). 
The black and red adjacency edges in $G(P,Q)$ form a collection of alternating black-red cycles. 

We say that a black-red cycle is an \emph{$\ell$-cycle} if it contains $\ell$ black edges (and $\ell$ red edges) and let $c_{\ell}(P,Q)$ be the number of $\ell$-cycles in $G(P,Q)$. We refer to $1$-cycles as \textit{trivial}\footnote{In the breakpoint graph 
constructed on synteny blocks of two genomes, there are no trivial cycles since no adjacency 
is shared by both genomes. However, the breakpoint graph constructed on orthologous genes or multi-genome synteny blocks may contain trivial cycles.}
and to the other cycles as \emph{non-trivial}. The vertices of non-trivial cycles are called \textit{breakpoints}.

A \emph{DCJ} in genome $Q$ replaces any pair of red adjacency edges $\{x,y\}$, $\{u,v\}$ with either a pair of edges $\{x,u\}$, $\{y,v\}$ or a pair of edges 
$\{u,y\}$, $\{v,x\}$. We say that such a DCJ \emph{operates} on the edges $\{x,y\}$, $\{u,v\}$ and their endpoints $x,y,u,v$.
A DCJ in genome $Q$ transforming it into a genome $Q'$ corresponds to a transformation of the breakpoint graph $G(P,Q)$ into the breakpoint graph $G(P,Q')$ (Fig.~\ref{fig:dcj}).
Each DCJ in the breakpoint graph can merge two black-red cycles into one (if edges $\{x,y\}$, $\{u,v\}$ belong to distinct cycles), split one cycle into two or keep the number of cycles intact (if edges $\{x,y\}$, $\{u,v\}$ belong to the same cycle).
The \emph{DCJ distance} between genomes $P$ and $Q$ is the minimum number of DCJs required to transform $Q$ into $P$. It can be evaluated as $d(P,Q) = b(P,Q) - c(P,Q)$, 
where $b(P,Q)=\sum_{\ell\geq 2} \ell\cdot c_\ell(P,Q)$ is half the number of breakpoints and $c(P,Q)=\sum_{\ell\geq 2} c_\ell(P,Q)$ is the number of non-trivial cycles in the breakpoint graph $G(P,Q)$~\cite{yancopoulos2005}.

\begin{figure}[!t]
 \centering\includegraphics[width=0.8\textwidth]{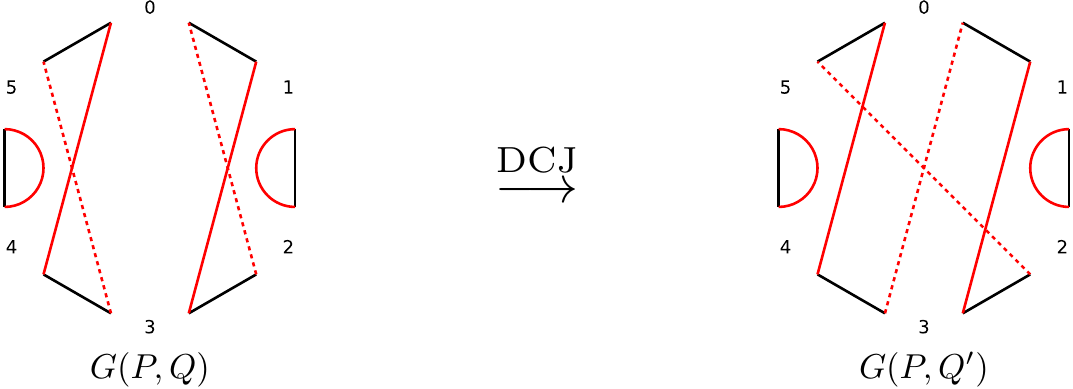}
  \caption{A DCJ in genome $Q$ replaces a pair of red edges in the breakpoint $G(P,Q)$ with another pair of red edges forming matching on the same $4$ vertices.}
  \label{fig:dcj}
\end{figure}

\subsection*{Evolutionary Model}

To estimate the true evolutionary distance between genomes $P$ and $Q$ on the same set of blocks, 
we view the evolution between them as a discrete Markov process that transforms genome $P$ into genome $Q$ with a sequence of DCJs occurring independently. 
The process starts at genome $X=P$ and ends at $X=Q$, and corresponds to a transformation of the breakpoint graphs starting at $G(P,P)$ (formed by a collection of trivial cycles) and ending at $G(P,Q)$. The number of DCJs $k$ in this transformation represents the true evolutionary distance between genomes $P$ and $Q$.

We remark that under the FBM, the number of trivial cycles (if any) in $G(P,Q)$ is an obscure parameter, 
since they may correspond to solid (non-fragile) regions as well as to fragile regions 
that just happen to remain conserved (in both $P$ and $Q$) by chance.\footnote{In contrast, the method of Lin and Moret~\cite{Lin08} considers only the latter option, which corresponds to the RBM.}
Furthermore, there may exist such conserved fragile regions within the blocks forming $P$ and $Q$ and thus such regions cannot be observed in $G(P,Q)$. 
To account for all these possibilities, we 
assume that $P$ and $Q$ are composed of a large unknown number $n$ of solid regions interspersed with the same number of fragile regions, 
some of which remain conserved by chance. In other words, we assume that the given blocks of genomes $P$ and $Q$ are formed by (invisible) solid regions and conserved fragile regions.
Let $P_n$ and $Q_n$ denote these representations of $P$ and $Q$ as sequences of the solid regions. 
We view genome $Q_n$ as obtained from $P_n$ with a sequence of $k$ DCJs each operating on two randomly selected fragile regions (i.e., adjacencies between solid regions).

The crucial observation is that while we do not know the number $n$ of solid regions, 
the breakpoint graphs $G(P,Q)$ and $G(P_n,Q_n)$ have the same cycle structure, except for trivial cycles. 
That is, we have $c_\ell(P_n,Q_n)=c_\ell(P,Q)$ for every $\ell\geq 2$, 
implying, in particular, that $b(P_n,Q_n)=b(P,Q)$, $c(P_n,Q_n)=c(P,Q)$, and $d(P_n,Q_n)=d(P,Q)$.
Indeed, if genomes $P'$ and $Q'$ are obtained from $P$ and $Q$ by replacing a single block $a$ with two consecutive smaller blocks $a_1,a_2$, then $G(P',Q')$ can be obtained from $G(P,Q)$ 
by adding one trivial cycle (corresponding to the shared adjacency $a_1,a_2$). Since the genomes $P_n$ and $Q_n$ can be obtained from $P$ and $Q$ with a number of such operations, the breakpoint graphs $G(P,Q)$ and $G(P_n,Q_n)$ may differ only in the number of trivial cycles.\footnote{In contrast to $c_1(P_n,Q_n)$, the value of $c_1(P,Q)$ is rather arbitrary and thus is ignored in our model.}

In our evolutionary model, the following parameters are observable:
\begin{itemize}
 \item $c_{\ell}=c_{\ell}(P_n,Q_n)=c_{\ell}(P,Q)$ for any $\ell \geq 2$, i.e., the number of $\ell$-cycles in $G(P,Q)$;
 \item $b=b(P_n,Q_n) = b(P,Q) = \sum_{\ell \geq 2} \ell\cdot c_{\ell}$, the number of broken fragile regions between $P$ and $Q$, 
which is also the number of synteny blocks between $P$ and $Q$, or half of the total length of all non-trivial cycles in $G(P,Q)$;
 \item $d = d(P_n,Q_n) = d(P,Q) =  b - \sum_{\ell\geq 2} c_\ell$, the DCJ distance between $P$ and $Q$;
\end{itemize}
while the following parameters are hidden:
\begin{itemize}
\item $c_1 = c_1(P_n,Q_n)$, the number of trivial cycles in $G(P_n,Q_n)$; 
 \item $n=n(P)=n(Q)$, the number of fragile regions in each of genomes $P$ and $Q$, half the total length of all cycles in $G(P_n,Q_n)$;
 \item $k = k(P,Q)$, the number of DCJs in the Markov process, the true evolutionary distance between $P$ and $Q$.
\end{itemize}

\subsection*{Extension to linear genomes}
To analyze linear genomes, we add one artificial adjacency between telomeres for each chromosome and consider the resulting circular genomes.
Since the number of chromosomes is often negligible as compared to the number of fragile regions (for example, in the yeast genomes that we analyze in the \emph{Discussion} section, the number of chromosomes ranges between $6$ and $8$, while the number of fragile regions is at least $710$), these artificial edges do not significantly affect the estimation. We refer to \cite{alekseyev2008multi} for a discussion of subtle differences in the analysis of circular and linear genomes.

\section*{Results}
We propose a new method for estimating the evolutionary distance between two genomes with high accuracy under the FBM.
A key component of our method is the analytical estimation of $c_\ell$ for various $\ell\geq 2$ in terms of $k$ and $n$, which we describe in the \emph{Theoretical Analysis} subsection below.
From this estimation, in the \emph{Estimation Algorithm} subsection we solve the inverse problem to find the true evolutionary distance $k$ (which is our goal) 
and the number of fragile regions $n$ (as a by-product). 
In our analysis, we consider only relatively small $\ell$ and assume that $n$ and $k$ are sufficiently large (see Theorem \ref{thm:main}).

\subsection*{Theoretical Analysis}
\label{sec:thana}

\begin{theorem}
Let genome $P_n$ be a genome with $n$ fragile regions and genome $Q_n$ be obtained from $P_n$ with $k = \lfloor{\gamma n /2}\rfloor$ random DCJs for some $\gamma>0$. 
Then, for any fixed $\ell$, the proportion of edges that belong to $\ell$-cycles in $G(P_n,Q_n)$ is
 $$\frac{\ell c_\ell}{n} = e^{-\gamma\ell}\frac{(\gamma\ell)^{\ell-1}}{\ell!} +O_p\left(\frac{1}{\sqrt{n}}\right)\textrm{ as } n \to \infty \, ,$$
 where $O_p\left(\frac{1}{\sqrt{n}}\right)$ denotes a term stochastically bounded\footnote{We remind that a sequence of random variables 
 $\{X_n\}$ is \emph{stochastically bounded} by a deterministic sequence $\{a_n\}$, denoted $X_n = O_p(a_n)$, if for all $\varepsilon>0$, there exists $C$ such that for all $n$, $\mathrm{Pr}\{\left|X_n/a_n\right|>C\}<\varepsilon$. Similarly, $X_n = o_p(a_n)$ if for all $C>0$, $\lim_{n\to\infty} \mathrm{Pr}\{\left|X_n/a_n\right| > C\} =0$.}.
 \label{thm:main}
\end{theorem}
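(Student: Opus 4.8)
The plan is to follow the evolution of the cycle structure of $G(P_n,Q_n)$ as a Markov chain on \emph{cycle types} (the multiset of cycle lengths); this is well defined because a DCJ selects its red edges uniformly at random, so by exchangeability the induced dynamics on the cycle lengths do not depend on vertex labels. Writing $E_\ell = \ell c_\ell$ for the number of black edges contained in $\ell$-cycles and $y_\ell = E_\ell/n$ for their proportion, I would first compute the expected one-step increment $\mathrm{E}[\Delta E_\ell]$ of a single random DCJ. When the two chosen red edges lie in two different cycles (of sizes $i$ and $\ell-i$), \emph{both} reconnections merge them into a single $\ell$-cycle; when they lie in the same cycle, the DCJ either splits it or leaves all lengths unchanged. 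Note that the leading behaviour is insensitive to the precise reconnection convention, since coalescence occurs regardless of which reconnection is applied.

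The key simplification, which I would establish next, is that coalescence dominates. For fixed $\ell$, the red-edge pairs capable of creating or destroying an $\ell$-cycle by a \emph{split} number only $O(n)$, out of the $\binom{n}{2}=\Theta(n^2)$ available pairs, so fragmentation contributes merely $O(1/n)$ to each scaled increment. Hence, to leading order, with continuous time $\gamma=2k/n$ advancing by $2/n$ per DCJ and initial data $y_1(0)=1$, $y_\ell(0)=0$ for $\ell\ge 2$, a direct computation of the expected increments yields the constant-kernel Smoluchowski coagulation equation $\frac{dy_\ell}{d\gamma}=\frac{\ell}{2}\sum_{i=1}^{\ell-1} y_i\,y_{\ell-i}-\ell\,y_\ell$. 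For each fixed $\ell$ this is a \emph{closed finite} system in $y_1,\dots,y_\ell$, since assembling an $\ell$-cycle by coalescence only ever requires strictly smaller cycles.

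To solve it, I would pass to the generating function $Y(\gamma,z)=\sum_{\ell\ge1} y_\ell z^\ell$, which converts the system into the first-order PDE $\partial_\gamma Y = z\,(\partial_z Y)\,(Y-1)$. This is solved by the function defined implicitly by $Y=z\,e^{\gamma(Y-1)}$, the probability generating function of the Borel distribution, and Lagrange inversion recovers $[z^\ell]Y = e^{-\gamma\ell}(\gamma\ell)^{\ell-1}/\ell!$ exactly; equivalently, one verifies by induction that the claimed values $f_\ell=e^{-\gamma\ell}(\gamma\ell)^{\ell-1}/\ell!$ satisfy the recurrence coming from the ODE (the cases $\ell=1,2$ give $y_1=e^{-\gamma}$ and $y_2=\gamma e^{-2\gamma}$, matching $f_1,f_2$). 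This identifies the deterministic limit as the asserted Borel term.

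Finally, to upgrade this mean behaviour to the stated $O_p(1/\sqrt n)$ conclusion, I would invoke the differential-equation method: the chain has increments bounded by $O(\ell)=O(1)$ per DCJ, the trend functions above are Lipschitz on the relevant simplex, and there are $k=\Theta(n)$ steps, so a Doob-martingale argument (or a direct bound $\mathrm{Var}(E_\ell)=O(n)$ followed by Chebyshev) gives $E_\ell = n\,f_\ell(\gamma)+O_p(\sqrt n)$, that is, $y_\ell=f_\ell(\gamma)+O_p(1/\sqrt n)$. The bias from replacing $\binom{n}{2}$ by $n^2/2$ and from discarding fragmentation is $O(1/n)$ and is absorbed into this error. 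The main obstacle I anticipate is not the ODE itself but the error analysis: confirming that all finite-size corrections are uniformly $O(1/n)$ and that the fluctuations are genuinely of order $\sqrt n$ rather than the cruder $\sqrt{n\log n}$ that a naive Azuma bound would yield, which is precisely what pins the error down to $O_p(1/\sqrt n)$.
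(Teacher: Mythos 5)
Your proposal is correct in outline and arrives at the right answer, but it takes a genuinely different route from the paper. The paper's proof is direct and combinatorial: it isolates the $\ell$-cycles built purely from $\ell-1$ merging DCJs, computes the exact probability $\tilde{p}_{n,k,\ell}$ that a fixed set of $\ell$ black edges forms such a cycle (Lemma~\ref{lem:mean}), bounds $\mathrm{Var}(\tilde{c}_\ell)=O(n)$ by a second-moment computation over pairs of edge sets (Lemma~\ref{lem:var}), disposes of all cycles touched by a splitting or preserving DCJ via Poisson stochastic domination (Lemma~\ref{lem:pois}), and finishes with Chebyshev. You instead derive the constant-kernel Smoluchowski coagulation system for the edge densities $y_\ell$ and solve it by Lagrange inversion; your drift computation, the PDE $\partial_\gamma Y = z(\partial_z Y)(Y-1)$ with implicit solution $Y=ze^{\gamma(Y-1)}$, and the resulting Borel masses all check out, and your $O(1/n)$ dismissal of fragmentation is exactly what the paper's Lemma~\ref{lem:pois} formalizes. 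What your approach buys is conceptual transparency (the link to coagulation and to the subcritical Erd{\"o}s--R{\'e}nyi regime is explicit) and easier generalization; what the paper's approach buys is that the hardest step of yours vanishes, since computing the mean and variance of $\tilde{c}_\ell$ in closed form at the terminal time only never requires uniform-in-time control of the process. That is the one place where your sketch still needs real work: the drift of $E_\ell$ depends on $y_1,\dots,y_\ell$, so the $O_p(1/\sqrt{n})$ errors in the lower densities feed into the drift of $E_\ell$ at each of the $\Theta(n)$ steps, and keeping the accumulated drift error at $O_p(\sqrt{n})$ requires $\sup_{t\le k}\left|y_i(t)-f_i(2t/n)\right|=O_p(1/\sqrt{n})$ for all $i<\ell$ --- an induction on $\ell$ combined with Doob's maximal inequality (or Kurtz's fluctuation theorem for density-dependent chains), not just a terminal-time Chebyshev bound. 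You correctly identify this as the main obstacle; it is surmountable, so your plan is complete in outline with that step still to be written out.
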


To prove Theorem~\ref{thm:main}, we will need some lemmas.

Let us consider a sequence $\mathcal{D}$ of $k$ random DCJs transforming the graph $G(P_n,P_n)$ into the graph $G(P_n,Q_n)$.
Each DCJ in $\mathcal{D}$ either merges two cycles (\emph{merging DCJ}), splits one cycle into two  (\emph{splitting DCJ}), or preserves the cycle structure (\emph{preserving DCJ}).
We call set of $\ell$ black edges \emph{proper} for the transformation $\mathcal{D}$ if in the breakpoint graph $G(P_n,Q_n)$ these edges form an $\ell$-cycle resulted entirely from $\ell-1$ merging DCJs. 
We remark that each $\ell$-cycle in $G(P_n,Q_n)$ is a result of $\ell-1$ merging DCJs (we denote the number of such cycles by $\tilde{c}_\ell$) 
or is formed with at least one splitting or preserving DCJ. 
The following lemmas provide an asymptotic for $\tilde{c}_\ell$ and an upper bound for $c_{\ell} - \tilde{c}_\ell$.

\begin{lemma}\label{lem:mean}
The mean value of $\tilde{c}_\ell$ has the following asymptotic:
$$\mathrm{E}(\tilde{c}_\ell) = \frac{n}{\ell} e^{-\gamma\ell}\frac{(\gamma\ell)^{\ell-1}}{\ell!} + O(1) \, .$$
\end{lemma}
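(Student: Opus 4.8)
The plan is to compute $\mathrm{E}(\tilde c_\ell)$ by linearity of expectation over candidate cycles. Writing $\tilde c_\ell=\sum_S \mathbf{1}_S$, where $S$ ranges over the $\ell$-element subsets of the $n$ black edges and $\mathbf{1}_S$ indicates that $S$ is \emph{proper} (its edges coalesce into a single $\ell$-cycle through $\ell-1$ merging DCJs and nothing else), symmetry makes all $\binom{n}{\ell}$ subsets equiprobable, so $\mathrm{E}(\tilde c_\ell)=\binom{n}{\ell}\,p_\ell$ with $p_\ell=\mathrm{Pr}(S\text{ proper})$ for one fixed $S$. Everything then reduces to an exact evaluation of $p_\ell$ followed by the limit $n\to\infty$, $k=\lfloor\gamma n/2\rfloor$.

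To pin down $p_\ell$ I would split the $k$ DCJs of $\mathcal{D}$ into the $\ell-1$ \emph{building} merges that assemble the cycle on $S$ and the remaining $k-\ell+1$ DCJs, which must avoid the red edges of the growing structure. Two invariants make the bookkeeping exact rather than approximate: the total number of red edges is $n$ at every step (a DCJ destroys two and creates two), and the number of red edges belonging to the forming structure stays equal to $\ell$ throughout the coalescence (a building merge takes one red edge from each of the two merged sub-cycles and returns two). Hence each DCJ is uniform over the $n(n-1)$ ordered (red-edge-pair, reconnection) choices; a building merge of sub-cycles with $i$ and $j$ black edges has $2ij$ favorable choices (either reconnection joins two cycles into one), contributing factor $\tfrac{2ij}{n(n-1)}$; and every non-building step avoids the $\ell$ forbidden red edges with the same probability $\tfrac{(n-\ell)(n-\ell-1)}{n(n-1)}$, independent of its position. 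Multiplying conditional probabilities along the process and choosing which $\ell-1$ of the $k$ steps are the merges gives the exact expression
\[
 p_\ell=\binom{k}{\ell-1}\left(\frac{2}{n(n-1)}\right)^{\ell-1}\left(\sum_{\text{hist}}\prod_m i_m j_m\right)\left(\frac{(n-\ell)(n-\ell-1)}{n(n-1)}\right)^{k-\ell+1},
\]
where the middle sum runs over all ordered sequences of $\ell-1$ cluster-merges coalescing $\ell$ labeled singletons into one cluster.

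The combinatorial heart, which I expect to be the main obstacle, is the identity $\sum_{\text{hist}}\prod_m i_m j_m=(\ell-1)!\,\ell^{\ell-2}$. I would prove it bijectively: reading $i_m j_m$ as the choice of one endpoint in each of the two merging sub-clusters, a weighted history becomes a labeled tree on $\ell$ vertices together with an ordering of its $\ell-1$ edges (acyclicity is automatic, since every merge joins two distinct clusters), and conversely each edge-ordered labeled tree yields a unique weighted history. Thus the sum equals the number of such objects, namely Cayley's count $\ell^{\ell-2}$ of labeled trees times the $(\ell-1)!$ edge orderings. (A quick induction on $\ell$, or the cases $\ell=2,3$ giving $1$ and $6$, serves as a check.)

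Finally I would substitute into $\mathrm{E}(\tilde c_\ell)=\binom{n}{\ell}\,p_\ell$ and expand. The polynomial prefactors collapse to $\tfrac{n\gamma^{\ell-1}\ell^{\ell-2}}{\ell!}=\tfrac{n}{\ell}\tfrac{(\gamma\ell)^{\ell-1}}{\ell!}$, while the avoidance factor equals $\bigl(1-\tfrac{2\ell}{n}+O(n^{-2})\bigr)^{\gamma n/2+O(1)}\to e^{-\gamma\ell}$. Because the leading term is $\Theta(n)$ and each approximation used ($\binom{n}{\ell}\sim n^\ell/\ell!$, $\binom{k}{\ell-1}\sim k^{\ell-1}/(\ell-1)!$, and the exponential limit) carries relative error $O(1/n)$, the net additive error is $O(1)$, yielding the stated asymptotic. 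As a sanity check, the main term coincides with the classical expected count of tree components of order $\ell$ in an Erd\H{o}s--R\'enyi graph with $\gamma n/2$ edges, reflecting the close analogy between the merging process and random-graph evolution.
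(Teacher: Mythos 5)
Your proof is correct and follows essentially the same route as the paper: linearity of expectation over the $\binom{n}{\ell}$ candidate sets, the identical factorization of the properness probability into $\binom{k}{\ell-1}$ position choices, the merging steps, and the avoidance factor $\bigl(\tfrac{(n-\ell)(n-\ell-1)}{n(n-1)}\bigr)^{k-\ell+1}$, followed by the same asymptotic expansion. The only substantive difference is that you derive the key count $(\ell-1)!\,\ell^{\ell-2}$ self-containedly via the bijection with edge-ordered labeled trees and Cayley's formula, whereas the paper obtains the equivalent factor $2^{\ell-1}(\ell-1)!\cdot\ell^{\ell-2}$ by citing the known count of minimal DCJ factorizations of an $\ell$-cycle.
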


\begin{proof}
Let us fix a set $A$ of $\ell$ black edges and find the probability that $A$ is proper for $\mathcal{D}$. %We denote by $p$ the probability that these $\ell$ edges form an $\ell$-cycle, obtained as a result of $\ell-1$ merging DCJs, in $G(P,Q)$. 
We remark that this probability does not depend on the content of $A$ but only on its size $|A|=\ell$, and denote it by $\tilde{p}_{n,k,\ell}$.
There are $2^{\ell-1} (\ell-1)!$ ways to arrange $\ell$ black edges into an $\ell$-cycle, since there exist $(\ell-1)!$ circular permutations of length $\ell$ and $2^{\ell-1}$ ways to assign directions to their elements. For any $\ell$-cycle, there are $\ell^{\ell-2}$ ways to obtain it as a result of $\ell-1$ DCJs~\cite{ouangraoua2010}.

We represent $\mathcal{D}$  as the union $\mathcal{S} \cup \mathcal{\bar{S}}$, where the subsequence $\mathcal{S}$ contains $\ell-1$ DCJs connecting edges from $A$ into an $\ell$-cycle, and the subsequence $\mathcal{\bar{S}}$ contains the rest of the DCJs. Since $|\mathcal{D}|=k$ and $|\mathcal{S}|=\ell-1$, there are $\binom{k}{\ell-1}$ ways to choose positions for elements of $\mathcal{S}$ in $\mathcal{D}$. 
The $k-\ell+1$ DCJs from $\mathcal{\bar{S}}$ operate on the $n-\ell$ red edges that are not incident to any black edge from $A$, 
and for each pair of red edges there are two possible ways to recombine them with a DCJ. This gives
$$2^{k-\ell+1}\binom{n-\ell}{2}^{k-\ell+1} $$
possible subsequences $\mathcal{\bar{S}}$.
So, there are 
$$2^{\ell-1}(\ell-1)! \ell^{\ell-2} \binom{k}{\ell-1} 2^{k-\ell+1} \binom{n-\ell}{2}^{k-\ell+1}$$
transformations $\mathcal{D}$ such that $A$ is proper for $\mathcal{D}$.
The total number of $k$-step transformations is equal to
  $2^k \binom{n}{2}^k \, .$
Then the probability $\tilde{p}_{n,k,\ell}$ can be found as follows
 \begin{equation}\label{eq:p}
 \tilde{p}_{n,k,\ell} = \frac{\ell^{\ell-2}(\ell-1)!\binom{k}{\ell-1}\binom{n-\ell}{2}^{k-\ell+1}}{\binom{n}{2}^k} \,.
 \end{equation}

Since there are $\binom{n}{\ell}$ ways to choose the set $A$ of black edges, the average number $\mathrm{E}(\tilde{c}_\ell)$ of $\ell$-cycles obtained by only merging DCJs equals 
\begin{equation}\label{eq:binp}
  \begin{split}
  \binom{n}{\ell} \tilde{p}_{n,k,\ell} &= \ell^{\ell-2}(\ell-1)!\binom{n}{\ell}\binom{k}{\ell-1} \cdot\frac{\binom{n-\ell}{2}^{k-\ell+1}}{\binom{n}{2}^k} \\
  &= \ell^{\ell-2}(\ell-1)! \, \frac{n^\ell}{\ell!}\cdot\frac{k^{\ell-1}}{(\ell-1)!}\frac{(n-\ell)^{2(k-\ell+1)}}{2^{k-\ell+1}}\cdot \frac{2^k}{n^{2k}}+O(1)  \\
  &= \ell^{\ell-2} k^{\ell-1}\,\cdot \frac{n^\ell}{\ell!}\cdot \frac{2^{\ell-1}}{n^{2\ell - 2}} \left(1-\frac{\ell}{n}\right)^{2(k-\ell+1)}+O(1)  \\
  &= \left(\frac{2k\ell}{n}\right)^{\ell-1} \frac{n}{\ell}\cdot \frac{1}{\ell!}e^{-\gamma \ell}+O(1) \\
  &= \frac{n}{\ell} e^{-\gamma\ell}\frac{(\gamma\ell)^{\ell-1}}{\ell!} + O(1) \, .
  \end{split}
\end{equation}

\end{proof}

\begin{lemma}
  The variance of $\tilde{c}_\ell$ is bounded: $\mathrm{Var} \left(\tilde{c}_\ell\right) = O(n)$. 
  \label{lem:var}
\end{lemma}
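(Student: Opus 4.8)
The plan is to express $\tilde{c}_\ell$ as a sum of indicator variables and control its variance through pairwise covariances. For each $\ell$-subset $A$ of the black edges, let $I_A$ be the indicator of the event that $A$ is proper for $\mathcal{D}$, so that $\tilde{c}_\ell=\sum_A I_A$ and $\mathrm{Var}(\tilde{c}_\ell)=\sum_{A,B}\mathrm{Cov}(I_A,I_B)$, the sum running over ordered pairs of $\ell$-subsets. Since the black edges form a perfect matching on the vertices, two distinct subsets $A\ne B$ meet iff they share a black edge. I would split the double sum into three regimes: $A=B$; $A\ne B$ with $A\cap B\ne\emptyset$; and $A\cap B=\emptyset$.

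For the diagonal $A=B$, I would simply bound $\mathrm{Var}(I_A)=\mathrm{E}(I_A)-\mathrm{E}(I_A)^2\le \mathrm{E}(I_A)$ and sum to get $\sum_A\mathrm{Var}(I_A)\le\mathrm{E}(\tilde{c}_\ell)=O(n)$ by Lemma~\ref{lem:mean}. For distinct overlapping subsets, the key observation is that a proper $A$ spans an $\ell$-cycle of $G(P_n,Q_n)$, and since the breakpoint graph is a disjoint union of cycles, two distinct proper subsets must lie in vertex-disjoint cycles; hence $A\cap B\ne\emptyset$ together with $A\ne B$ forces $I_AI_B=0$, so $\mathrm{Cov}(I_A,I_B)=-\mathrm{E}(I_A)\mathrm{E}(I_B)\le 0$, and these terms can only decrease the variance and may be discarded for an upper bound.

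The substantive case is $A\cap B=\emptyset$, where I would compute $\mathrm{E}(I_AI_B)$ by repeating the counting of Lemma~\ref{lem:mean} for two cycles at once: choosing the $\ell-1$ positions of the DCJs building $A$'s cycle and then the $\ell-1$ positions building $B$'s cycle ($\binom{k}{\ell-1}\binom{k-\ell+1}{\ell-1}$ ways), weighting each cycle construction as before, and letting the remaining $k-2(\ell-1)$ DCJs act on the $n-2\ell$ red edges incident to neither $A$ nor $B$. This yields a closed form for $\mathrm{E}(I_AI_B)$ whose ratio to $\tilde{p}_{n,k,\ell}^2$ I would show equals $1+O(1/n)$. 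The main obstacle is exactly this asymptotic comparison: after the powers of $2$ and the factors $(\ell^{\ell-2}(\ell-1)!)^2$ cancel, one is left with $\frac{\binom{k-\ell+1}{\ell-1}}{\binom{k}{\ell-1}}\cdot\binom{n}{2}^{k}\frac{\binom{n-2\ell}{2}^{k-2\ell+2}}{\binom{n-\ell}{2}^{2k-2\ell+2}}$, and the near-independence hinges on the cancellation of the leading $n^{2k}$ scaling. Writing $2k-2\ell+2=k+(k-2\ell+2)$ and using $\binom{n}{2}\binom{n-2\ell}{2}/\binom{n-\ell}{2}^2=1-O(1/n^2)$ together with $k=\lfloor\gamma n/2\rfloor=\Theta(n)$, the log of this product telescopes to $O(1/n)$, while the $k$-binomial ratio is a product of $\ell-1$ factors each $1-O(1/k)=1-O(1/n)$; hence the whole ratio is $1+O(1/n)$.

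Finally I would assemble the bound. For disjoint pairs, $\mathrm{Cov}(I_A,I_B)=\mathrm{E}(I_AI_B)-\tilde{p}_{n,k,\ell}^2=O(1/n)\,\tilde{p}_{n,k,\ell}^2$, and summing over the at most $\binom{n}{\ell}^2$ such pairs gives $O(1/n)\,(\binom{n}{\ell}\tilde{p}_{n,k,\ell})^2=O(1/n)\,\mathrm{E}(\tilde{c}_\ell)^2=O(1/n)\cdot O(n^2)=O(n)$, using $\mathrm{E}(\tilde{c}_\ell)=\Theta(n)$ from Lemma~\ref{lem:mean}. Combining the three regimes, and noting the overlapping terms are nonpositive, yields $\mathrm{Var}(\tilde{c}_\ell)\le O(n)+O(n)=O(n)$, as claimed.
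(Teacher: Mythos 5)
Your proof is correct and follows essentially the same route as the paper: the same decomposition of $\mathrm{Var}(\tilde{c}_\ell)$ over pairs of $\ell$-subsets into the diagonal, overlapping, and disjoint cases, with $X_A X_B = 0$ for distinct overlapping sets and near-independence for disjoint ones. The only difference is bookkeeping: the paper writes $\mathrm{E}(X_A X_B)=\tilde{p}_{n,k,\ell}\,\tilde{p}_{n-\ell,k-\ell+1,\ell}$ for disjoint pairs and reuses the asymptotic of Lemma~\ref{lem:mean}, whereas you expand the two-cycle count directly and show its ratio to $\tilde{p}_{n,k,\ell}^2$ is $1+O(1/n)$ --- the same estimate.
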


\begin{proof}
For a transformation $\mathcal{D}$ and a set $A$ of black edges, we define a random variable $X_A$: 
$$
  X_{A} = \begin{cases}
                   1, & \textrm{if }A\textrm{ is proper for } \mathcal{D},\\
                   0, & \textrm{otherwise.}
                  \end{cases}
$$
By definition, $\tilde{c}_\ell = \sum_A X_A,$ where the sum is taken over all possible sets of $\ell$ black edges. The variance of $\tilde{c}_\ell$ is equal to
$$\mathrm{E} \left(\tilde{c}_\ell^2\right) - \mathrm{E} \left(\tilde{c}_\ell\right)^2 \, .$$
Since
$$\mathrm{E} \left(\tilde{c}_\ell^2\right) = \sum_A \sum_B \mathrm{E} \left(X_A X_B\right) \, ,$$
we will find $\mathrm{E} \left(X_A X_B\right)$ for each pair $A,B$ of sets of $\ell$ black edges:
$$
  \mathrm{E} \left(X_A X_B\right) = \begin{cases}
                   \tilde{p}_{n,k,\ell}, & \textrm{if }A=B \, ,\\
                   \tilde{p}_{n,k,\ell}\cdot\tilde{p}_{n-\ell,k-\ell+1,\ell}, & \textrm{if }A\cap B = \emptyset \, ,\\
                   0, & \textrm{otherwise,}
                  \end{cases}
$$
where $\tilde{p}_{n,k,\ell}$ is defined by \eqref{eq:p}.
This implies 
$$\mathrm{E} \left(\tilde{c}_\ell^2\right) = \binom{n}{\ell} \tilde{p}_{n,k,\ell} +\binom{n}{\ell} \binom{n-\ell}{\ell} \tilde{p}_{n,k,\ell}\cdot \tilde{p}_{n-\ell,k-\ell+1,\ell} \, .$$
Finally, using \eqref{eq:binp}, we obtain
\begin{small}
\[
\begin{split}
\mathrm{Var}\left(\tilde{c}_\ell\right) &= \binom{n}{\ell} \tilde{p}_{n,k,\ell} +\binom{n}{\ell} \binom{n-\ell}{\ell} \tilde{p}_{n,k,\ell}\cdot \tilde{p}_{n-\ell,k-\ell+1,\ell} - \mathrm{E}  \left(\tilde{c}_\ell\right)^2\\
&=\binom{n}{\ell} \tilde{p}_{n,k,\ell}\left(1+\binom{n-\ell}{\ell}\tilde{p}_{n-\ell,k-\ell+1,\ell}-\binom{n}{\ell} \tilde{p}_{n,k,\ell}\right)\\
&=\left(\frac{n}{\ell} e^{-\gamma\ell}\frac{(\gamma\ell)^{\ell-1}}{\ell!} + O(1)\right)\left(1+\frac{n-\ell}{\ell} e^{-\gamma\ell}\frac{(\gamma\ell)^{\ell-1}}{\ell!} -\frac{n}{\ell} e^{-\gamma\ell}\frac{(\gamma\ell)^{\ell-1}}{\ell!} + O(1)\right) \\
  &= \left(\frac{n}{\ell} e^{-\gamma\ell}\frac{(\gamma\ell)^{\ell-1}}{\ell!} + O(1)\right) \left(1- e^{-\gamma\ell}\frac{(\gamma\ell)^{\ell-1}}{\ell!} + O(1)\right) = O(n).
\end{split}
\]
\end{small}
\end{proof}

For a positive integer $M$, we call a splitting DCJ \emph{$M$-splitting}, if it splits some $m$-cycle into an $i$-cycle and an $(m-i)$-cycle for some $i \leq M$.
Similarly, we call a preserving DCJ \emph{$M$-preserving} if it operates on an $m$-cycle for some $m \leq M$.

\begin{lemma} Let $M$ be any positive integer. Then (i) the number of $M$-splitting DCJs in $\mathcal{D}$ is stochastically dominated by a Poisson random variable with parameter $\gamma M/2$;
(ii) the number of $M$-preserving DCJs in $\mathcal{D}$ is stochastically dominated by a Poisson random variable with parameter $\gamma M^2/4$.
 \label{lem:pois}
\end{lemma}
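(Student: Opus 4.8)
The plan is to bound, at each step of the Markov chain $\mathcal{D}$, the conditional probability that the current DCJ is $M$-splitting (resp.\ $M$-preserving) given the entire past by a deterministic quantity of order $1/n$, and then to lift these per-step bounds to a Poisson bound on the total count by a coupling argument. The structural observation that makes the per-step bound uniform over the random history is that a DCJ only recombines red edges and hence preserves their total number: at every step the current breakpoint graph has exactly $n$ red edges, so if its cycles have sizes $m_1,m_2,\ldots$ then $\sum_j m_j=n$ regardless of how the cycles were formed. A DCJ is chosen by picking an unordered pair of red edges uniformly among $\binom{n}{2}$ and then one of two recombinations, so there are $n(n-1)$ equally likely DCJs.

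First I would count, for a fixed cycle structure, the DCJs of each type. Labelling the red edges of a cycle of size $m$ cyclically, cutting an unordered pair of them partitions its $m$ black edges into two contiguous arcs of sizes $i$ and $m-i$, and exactly one of the two recombinations performs the split into an $i$-cycle and an $(m-i)$-cycle while the other preserves the cycle (as one checks on a single cycle). Such a split is $M$-splitting iff $\min(i,m-i)\le M$; since for each of the $M$ admissible smaller-arc sizes there are $m$ choices of the pair, a cycle of size $m$ admits at most $mM$ $M$-splitting DCJs. Summing over cycles and using $\sum_j m_j=n$, the number of $M$-splitting DCJs available at any step is at most $Mn$, so the conditional probability of an $M$-splitting step is at most $\frac{Mn}{n(n-1)}=\frac{M}{n-1}$, uniformly in the past. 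For the preserving case each within-cycle pair contributes exactly one preserving recombination, so the number of $M$-preserving DCJs equals $\sum_{j:\,m_j\le M}\binom{m_j}{2}\le \frac{M^2}{2}\cdot\#\{\text{cycles}\}\le \frac{M^2 n}{2}$, giving a uniform conditional bound $\frac{M^2}{2(n-1)}$.

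Next I would convert these bounded conditional probabilities into a Poisson stochastic domination. Writing $X_t$ for the indicator that step $t$ is $M$-splitting and $\mathcal{F}_{t-1}$ for the history, we have $\Pr[X_t=1\mid\mathcal{F}_{t-1}]\le p:=\frac{M}{n-1}$ almost surely. A standard coupling (realize each step through an independent uniform variable $U_t$ and set $X_t=1$ iff $U_t\le \Pr[X_t=1\mid\mathcal{F}_{t-1}]$ and $Y_t=1$ iff $U_t\le p$) yields independent Bernoulli$(p)$ variables $Y_t\ge X_t$, so $\sum_t X_t$ is stochastically dominated by Binomial$(k,p)$. Since a single Bernoulli$(p)$ is stochastically dominated by a Poisson with parameter $-\ln(1-p)$ and sums of independent Poissons are Poisson, $\sum_t X_t$ is stochastically dominated by a Poisson with parameter $-k\ln(1-p)$. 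With $p=\frac{M}{n-1}$ and $k=\lfloor\gamma n/2\rfloor$ this parameter tends to $\gamma M/2$ as $n\to\infty$, giving part (i); the identical scheme with $p=\frac{M^2}{2(n-1)}$ gives parameter $\to\gamma M^2/4$ and hence part (ii).

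The main obstacle is twofold. The delicate structural point is establishing the per-step bound uniformly over the random history: this is precisely where the invariant $\sum_j m_j=n$ and the arc-counting of a single cycle are used, so that the bound does not depend on the uncontrolled cycle-size profile. The other subtle point is the direction of the Bernoulli-versus-Poisson comparison: a Bernoulli$(p)$ is not dominated by Poisson$(p)$ but only by Poisson$(-\ln(1-p))$, so the honest parameter is $-k\ln(1-p)$, which equals the stated $\gamma M/2$ only in the limit $n\to\infty$. Since the lemma is used only to guarantee that these counts are $O_p(1)$, this asymptotic identification of the Poisson parameter is all that is required.
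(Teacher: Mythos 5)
Your proof follows essentially the same route as the paper: the same per-step counting (at most $mM$ available $M$-splitting DCJs per $m$-cycle and $\binom{m}{2}$ within-cycle preserving pairs, combined with the invariant $\sum_j m_j = n$) yielding the uniform bounds $\frac{M}{n-1}$ and $\frac{M^2}{2(n-1)}$, followed by comparison of the resulting binomial count with a Poisson law. Your treatment is in fact somewhat more careful than the paper's at the final step, where you make the history-uniform conditioning explicit via a coupling and correctly note that the honest dominating Poisson parameter is $-k\ln(1-p)$, which only equals $\gamma M/2$ in the limit $n\to\infty$ --- a point the paper glosses over but which, as you observe, is harmless for the lemma's intended use.
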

\begin{proof}
 %Consider the $\ell$-cycles obtained with at least one splitting DCJ. For each such cycle there exists the last splitting DCJ that participates in the forming of the cycle. Such DCJ splits some $m$-cycle into $i$-cycle and $m-i$ cycle with some $i \leq \ell$.   %The upper bound for their number is the twice number of splitting DCJ. 
To prove (i), we notice that the probability that a DCJ from $\mathcal{D}$ splits a fixed $m$-cycle into an $i$-cycle and an $(m-i)$-cycle is 
$\frac{m}{n(n-1)}$ (if $i\ne m/2$) or $\frac{m}{2n(n-1)}$ (if $i=m/2$). The probability that a DCJ splits a cycle into an $i$-cycle with $i \leq M$ and another cycle can be bounded by
 $$\sum_{i=1}^M \sum_{m>i} \frac{m}{n(n-1)}c_m \leq \frac{M}{n-1} \, .$$
 This bound implies that a number of $M$-splitting DCJs in $\mathcal{D}$ is stochastically dominated by the random variable $Y$ that equals $j$ with the probability:
 %\begin{align*}
 %P(Y = j) &= 
 $$
 \binom{k}{j}\left(\frac{M}{n-1}\right)^j \left(1-\frac{M}{n-1}\right)^{k-j}= e^{-\gamma M/2} \frac{(\gamma M/2)^j}{j!} + o(1)\, .
 $$
 %\end{align*}
 Since $e^{-\gamma M/2} \frac{(\gamma M/2)^j}{j!}$ represents the probability that a Poisson random variable with parameter $\gamma M/2$
 is equal to $j$, the proof of (i) is completed.
 %The number of $\ell$-cycles, formed with at least one splitting DCJs, is majorated by the Poisson random variable $\Pi(\gamma\ell)$ with parameter $\gamma\ell$. The fraction
 %$\frac{\Pi(\gamma \ell)}{n} \to 0$ as $n \to \infty$. 

To prove (ii), we notice that the probability that a DCJ from $\mathcal{D}$ operates on two red edges from the fixed $m$-cycle and does not split this cycle equals $\frac{m(m-1)}{2n(n-1)}$. 
Summing over all $m\leq M$, we bound the probability that a fixed DCJ is $M$-preserving as
 $$\sum_{m \leq M} \frac{m^2}{2n(n-1)}c_m \leq \frac{M^2}{2(n-1)}$$
and the proof is completed similarly to the proof of (i) above.
\end{proof}

Now we can prove Theorem~\ref{thm:main}.
\begin{proof}[Proof of Theorem~\ref{thm:main}.]
 Lemma~\ref{lem:pois} 
implies that $c_\ell - \tilde{c}_\ell$, the number of $\ell$-cycles obtained with at least one splitting or preserving DCJ, is of order $O(1)$. 
Indeed, each such $\ell$-cycle uniquely determines the last splitting or preserving DCJ that participates in the formation of this cycle 
(i.e., operates on some of the cycle vertices). 
Clearly, this last splitting (resp. preserving) DCJ is $\ell$-splitting (resp. $\ell$-preserving). 
Since such last DCJ correspond to at most two cycles, it follows that the number $c_\ell - \tilde{c}_\ell$ does not exceed twice the number of $\ell$-splitting and $\ell$-preserving DCJs.
By Lemma~\ref{lem:pois}, 
the number of $\ell$-splitting and $\ell$-preserving DCJs is bounded (stochastically dominated by a Poisson random variable), 
we have $c_\ell - \tilde{c}_\ell = O(1)$.
 
 Since by Lemmas~\ref{lem:mean} and \ref{lem:var},  the number $\tilde{c}_\ell$ has the mean value $\frac{n}{\ell} e^{-\gamma\ell}\frac{(\gamma\ell)^{\ell-1}}{\ell!} + O(1)$ and the standard deviation of order $O(\sqrt{n})$.
 The fraction of edges in $\ell$-cycles is $\frac{\ell c_\ell}{n}$. Applying Chebyshev's inequality we obtain:
 $$
 \frac{\ell c_\ell}{n} = e^{-\gamma\ell}\frac{(\gamma\ell)^{\ell-1}}{\ell!} +O_p\left(\frac{1}{\sqrt{n}}\right)\, ,
 $$
 which completes the proof.
 \end{proof}

We remark that for $\gamma < 1$, the sequence 
$$p_\ell = e^{-\gamma\ell}\frac{(\gamma\ell)^{\ell-1}}{\ell!},\qquad \ell=1,2,\dots$$
defines a probability mass function, which characterizes a Borel distribution \cite{tanner1961}. So, if $\gamma<1$, 
for a randomly chosen edge, the length of the cycle that contains it follows a Borel distribution with parameter $\gamma$. 
Moreover, the results of Erd{\"o}s and R{\'e}nyi~\cite{erdos1960} imply that
$$1-\sum_{\ell=1}^{\infty}{e^{-\gamma\ell}\frac{(\gamma\ell)^{\ell-1}}{\ell\cdot\ell!}} = \frac{\gamma}{2} \, ,$$
which can also be seen empirically in Fig.~\ref{fig:d2br}a.
It follows that for $\gamma<1$, $d = k(1+o(1))$ and the DCJ distance closely approximates the true distance. 
While $d=k$ corresponds to the parsimony assumption, we say that the process is in the \emph{parsimony phase} as soon as $\gamma<1$.

From Theorem~\ref{thm:main}, we have
\begin{corollary}\label{cor:br}
 $$\frac{b}{n} = 1 - e^{-\gamma} +o_p(1) \, .$$
\end{corollary}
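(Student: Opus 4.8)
The plan is to avoid wrestling with the infinite sum $b/n=\sum_{\ell\ge 2}\ell c_\ell/n$ directly, and instead reduce the whole statement to the single instance $\ell=1$ of Theorem~\ref{thm:main}. The key is a conservation identity: since $n$ is half the total length of \emph{all} cycles in $G(P_n,Q_n)$, we have $n=\sum_{\ell\ge 1}\ell c_\ell$. Splitting off the $\ell=1$ term and recalling that $b=\sum_{\ell\ge 2}\ell c_\ell$, this reads $n=c_1+b$, i.e. $b/n=1-c_1/n$. Thus the corollary is equivalent to the claim $c_1/n=e^{-\gamma}+o_p(1)$, which is exactly the $\ell=1$ case of Theorem~\ref{thm:main}, noting that $e^{-\gamma\cdot1}(\gamma\cdot1)^{1-1}/1!=e^{-\gamma}$.

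First I would verify that the entire chain of lemmas underlying Theorem~\ref{thm:main} remains valid at $\ell=1$, so that the shortcut is legitimate. The computation in Lemma~\ref{lem:mean} gives $\mathrm{E}(\tilde{c}_1)=n\,e^{-\gamma}+O(1)$ and that in Lemma~\ref{lem:var} gives $\mathrm{Var}(\tilde{c}_1)=O(n)$, both formally unchanged. For the discrepancy $c_1-\tilde{c}_1$, the argument in the proof of Theorem~\ref{thm:main} bounds it by twice the number of $1$-splitting and $1$-preserving DCJs; here there are no $1$-preserving DCJs (a DCJ must operate on two red edges, whereas a $1$-cycle carries only one), and the number of $1$-splitting DCJs is $O(1)$ by Lemma~\ref{lem:pois}(i) with $M=1$. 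Hence $c_1-\tilde{c}_1=O(1)$, just as in the case $\ell\ge 2$.

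Combining the mean of $\tilde{c}_1$, its standard deviation of order $\sqrt{n}$, and $c_1-\tilde{c}_1=O(1)$, an application of Chebyshev's inequality (exactly as at the end of the proof of Theorem~\ref{thm:main}) yields $c_1/n=e^{-\gamma}+O_p(1/\sqrt{n})$. Substituting into $b/n=1-c_1/n$ then gives $b/n=1-e^{-\gamma}+O_p(1/\sqrt{n})=1-e^{-\gamma}+o_p(1)$, as required.

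The only genuine subtlety is the legitimacy of invoking Theorem~\ref{thm:main} at $\ell=1$: trivial ($1$-)cycles play a distinguished role in the model and $c_1$ is listed as a hidden parameter, so one cannot simply quote the theorem without checking that its proof survives at $\ell=1$ — which is what the verification above accomplishes. The alternative route, applying the theorem termwise for each $\ell\ge 2$ and summing, would instead force a uniform control of the tail $\sum_{\ell\ge L}\ell c_\ell/n$ in $n$, a substantially harder estimate; the identity $b/n=1-c_1/n$ is precisely what lets us bypass it.
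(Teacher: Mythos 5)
Your proof is correct and takes essentially the same route as the paper: the paper's own proof is the one-line observation that $b = n - c_1$ together with the $\ell=1$ case of Theorem~\ref{thm:main}, giving $\frac{c_1}{n}=e^{-\gamma}+o_p(1)$. Your additional verification that the lemmas and the splitting/preserving bound survive at $\ell=1$ is a reasonable piece of due diligence that the paper leaves implicit, but it does not change the argument.
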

\begin{proof}
 Indeed, $b = n - c_1$ and $\frac{c_1}{n} = e^{-\gamma}+o_p(1)\,$.
\end{proof}

\begin{corollary}\label{cor:d2}
$$\frac{d}{n} = 1-\sum_{\ell=1}^{\infty} \frac{p_{\ell}}{\ell} +o_p(1) \, .$$
\end{corollary}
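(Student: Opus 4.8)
The plan is to reduce the statement to a deterministic identity and then promote the per-length asymptotics of Theorem~\ref{thm:main} to the full series, with the only real work being control of the tail. First I would record that $d = b - c = \sum_{\ell\ge2}(\ell-1)c_\ell$ while $n = \sum_{\ell\ge1}\ell c_\ell$; since the $\ell=1$ term contributes nothing to $\sum(\ell-1)c_\ell$, these combine into the exact identity $d = n - C$, where $C = \sum_{\ell\ge1}c_\ell$ is the total number of cycles in $G(P_n,Q_n)$. Thus $\frac{d}{n} = 1 - \sum_{\ell\ge1}\frac{c_\ell}{n}$ holds deterministically, and it suffices to prove that $\sum_{\ell\ge1}\frac{c_\ell}{n} = \sum_{\ell\ge1}\frac{p_\ell}{\ell} + o_p(1)$.

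Next I would handle the head of the series. Dividing the conclusion of Theorem~\ref{thm:main} by $\ell$ gives $\frac{c_\ell}{n} = \frac{p_\ell}{\ell} + O_p(n^{-1/2})$ for each fixed $\ell$, so for any fixed truncation level $L$ the finite sum satisfies $\sum_{\ell\le L}\frac{c_\ell}{n} = \sum_{\ell\le L}\frac{p_\ell}{\ell} + O_p(n^{-1/2}) = \sum_{\ell\le L}\frac{p_\ell}{\ell} + o_p(1)$, since a finite sum of $O_p(n^{-1/2})$ terms is again $O_p(n^{-1/2})$.

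The hard part is the tail, because Theorem~\ref{thm:main} is only a pointwise (in $\ell$) statement and one cannot naively exchange the limits $L\to\infty$ and $n\to\infty$ inside a probabilistic estimate. The key device is a purely deterministic tail bound: each cycle counted in $\sum_{\ell>L}c_\ell$ consumes more than $L$ of the $n$ black edges, so $\sum_{\ell>L}\ell c_\ell \le n$ and therefore $\sum_{\ell>L}\frac{c_\ell}{n}\le\frac{1}{L}$, uniformly in the randomness. On the analytic side the $p_\ell$ are summable (for $\gamma<1$ they form the Borel probability mass function, so $\sum_{\ell}p_\ell = 1$), whence $\sum_{\ell>L}\frac{p_\ell}{\ell}\le\sum_{\ell>L}p_\ell\to0$ as $L\to\infty$.

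Finally I would combine these by a standard $\varepsilon$-argument. Given $\varepsilon>0$, pick $L>1/\varepsilon$ large enough that also $\sum_{\ell>L}\frac{p_\ell}{\ell}<\varepsilon$, so both tails are at most $\varepsilon$; then with $L$ fixed the head difference is $o_p(1)$. Writing $\sum_{\ell\ge1}\frac{c_\ell}{n} - \sum_{\ell\ge1}\frac{p_\ell}{\ell}$ as the head difference plus the two tails bounds its absolute value by $2\varepsilon$ with probability tending to $1$; since $\varepsilon$ is arbitrary, this difference is $o_p(1)$, and the deterministic identity $\frac{d}{n} = 1 - \sum_{\ell\ge1}\frac{c_\ell}{n}$ from the first step then yields the claim. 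I expect the interchange of limits in the tail to be the genuine obstacle, and the elementary bound $\sum_{\ell>L}\frac{c_\ell}{n}\le\frac{1}{L}$ (valid because long cycles are necessarily few) is precisely what makes that interchange legitimate.
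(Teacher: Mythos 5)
Your proposal is correct and follows essentially the same route as the paper: the deterministic identity $d = n - \sum_{\ell\geq 1} c_\ell$, the truncation at a fixed level with Theorem~\ref{thm:main} controlling the head, the deterministic bound $\sum_{\ell > L} c_\ell \leq n/L$ for the tail, and the convergence of $\sum_\ell p_\ell/\ell$ to close the $\varepsilon$-argument. The only cosmetic difference is that the paper writes the per-$\ell$ error as $\xi_\ell/\sqrt{n}$ with mean-zero, variance-bounded $\xi_\ell$ rather than using $O_p$ notation directly, and it does not need (nor does your argument really need) the restriction $\gamma<1$ you mention in passing, since $\sum_\ell p_\ell/\ell$ converges for all $\gamma>0$.
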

\begin{proof} 
By definition, 
$$d = b - \sum_{\ell=2}^\infty c_\ell = n - \sum_{\ell=1}^\infty c_\ell \, .$$
For any fixed $M$, the number of $m$-cycles, where $m > M$, is bounded by $\frac{n}{M}$. By Theorem \ref{thm:main}, for any fixed $\ell$, we have $\frac{c_\ell}{n} = \frac{p_\ell}{\ell}+\frac{\xi_\ell}{\sqrt{n}}$, where $\xi_\ell$ is some random variable with $\mathrm{E}(\xi_\ell) =0$ and $\mathrm{Var}(\xi_\ell) < 1$. Hence, for any fixed $M$ we have:
$$\frac{1}{n}\sum_{\ell=1}^\infty c_\ell = \frac{1}{n}\sum_{\ell=1}^M c_\ell + \frac{1}{n}\sum_{\ell=M+1}^\infty c_\ell \leq \sum_{\ell=1}^{M}\left(\frac{p_\ell}{\ell}+\frac{\xi_\ell}{\sqrt{n}}\right)+\frac{1}{M} \, .$$
Let $\eta = \frac{1}{M}\sum_{\ell=1}^M \xi_\ell$,  then
$$\frac{1}{n}\sum_{\ell=1}^\infty c_\ell \leq \sum_{\ell=1}^{M}\frac{p_\ell}{\ell}+\frac{M\eta}{\sqrt{n}}+\frac{1}{M} \, .$$
From the definition of $\eta$ it follows that the random  variable $\eta$ has the mean value $\mathrm{E}(\eta) =0$ and the variance $\mathrm{Var}(\eta) < 1$. 
Let
$$r(m) = \sum_{\ell=m+1}^{\infty}\frac{p_\ell}{\ell} \, ,$$
so that $r(0) = \sum_{\ell=1}^{\infty} \frac{p_{\ell}}{\ell}$.
Then
$$\frac{1}{n}\sum_{\ell=1}^\infty c_\ell - r(0) \leq \frac{1}{M} - r(M) +\frac{M\eta}{\sqrt{n}} \, .$$
The lower bound is obvious
$$\frac{1}{n}\sum_{\ell=1}^\infty c_\ell - r(0) \geq - r(M)+\frac{M\eta}{\sqrt{n}} \, .$$

Since the series $\sum_{\ell=1}^{\infty}\frac{p_\ell}{\ell}$ converges, for each $\varepsilon > 0$ one can choose $M$ and then choose $n$ in such a way that
$$-\varepsilon \leq \frac{1}{n}\sum_{\ell=1}^\infty c_\ell - r(0) \leq \varepsilon(1+\eta) \, .$$
\end{proof}

The estimations given in Corollaries~\ref{cor:br} and~\ref{cor:d2} are very precise, as shown in Fig.~\ref{fig:d2br}. All the simulations are performed for $n = \num{1000}$ (see \emph{Discussion}).

\begin{figure}[!t]
\begin{center}
\begin{tabular}{lll}
\textbf{(a)} & ~~~ & \textbf{(b)} \\
\\
\begin{tabular}{c}
\includegraphics[width=0.45\textwidth]{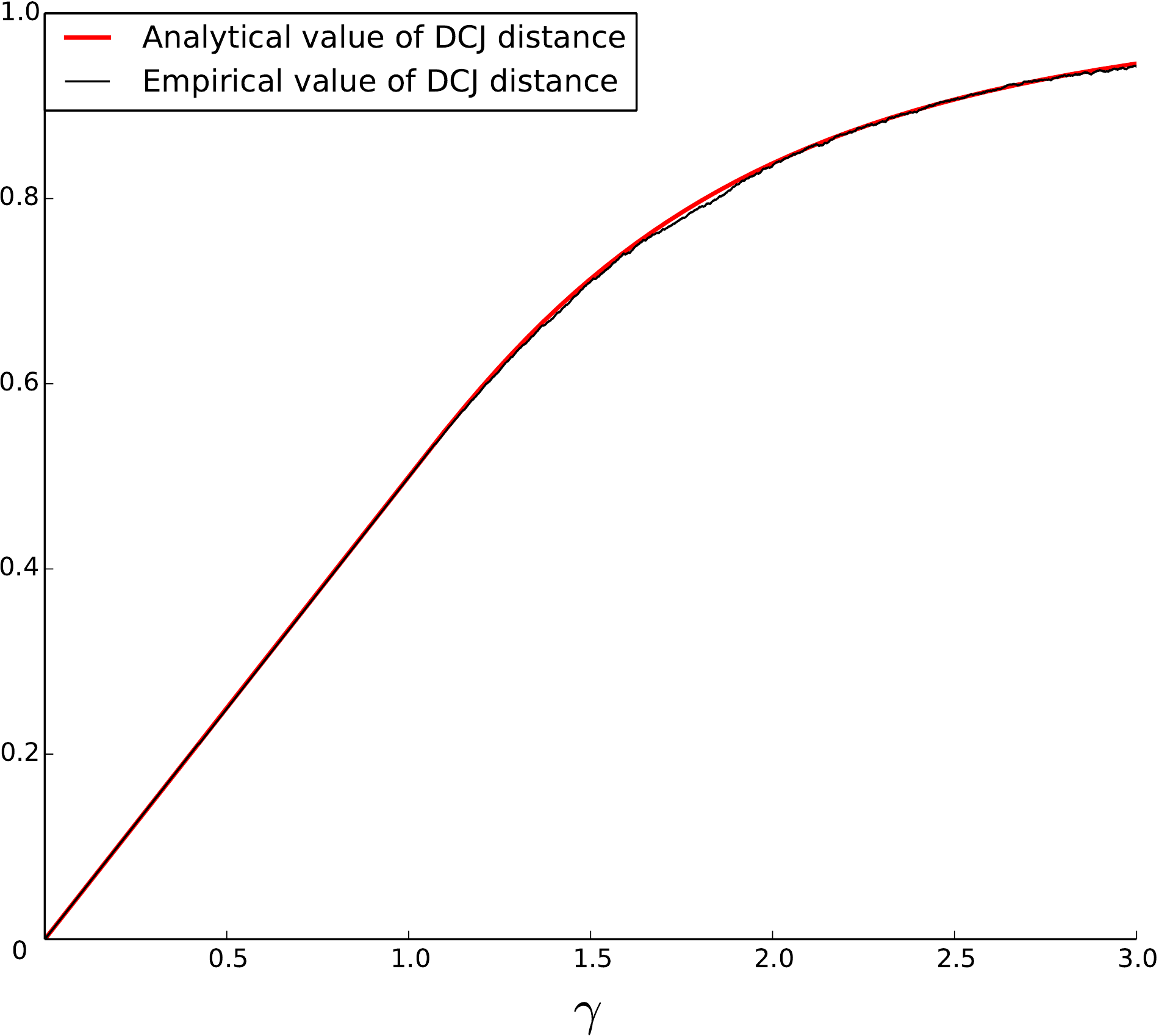}
\end{tabular}
&&
\begin{tabular}{c}
\includegraphics[width=0.45\textwidth]{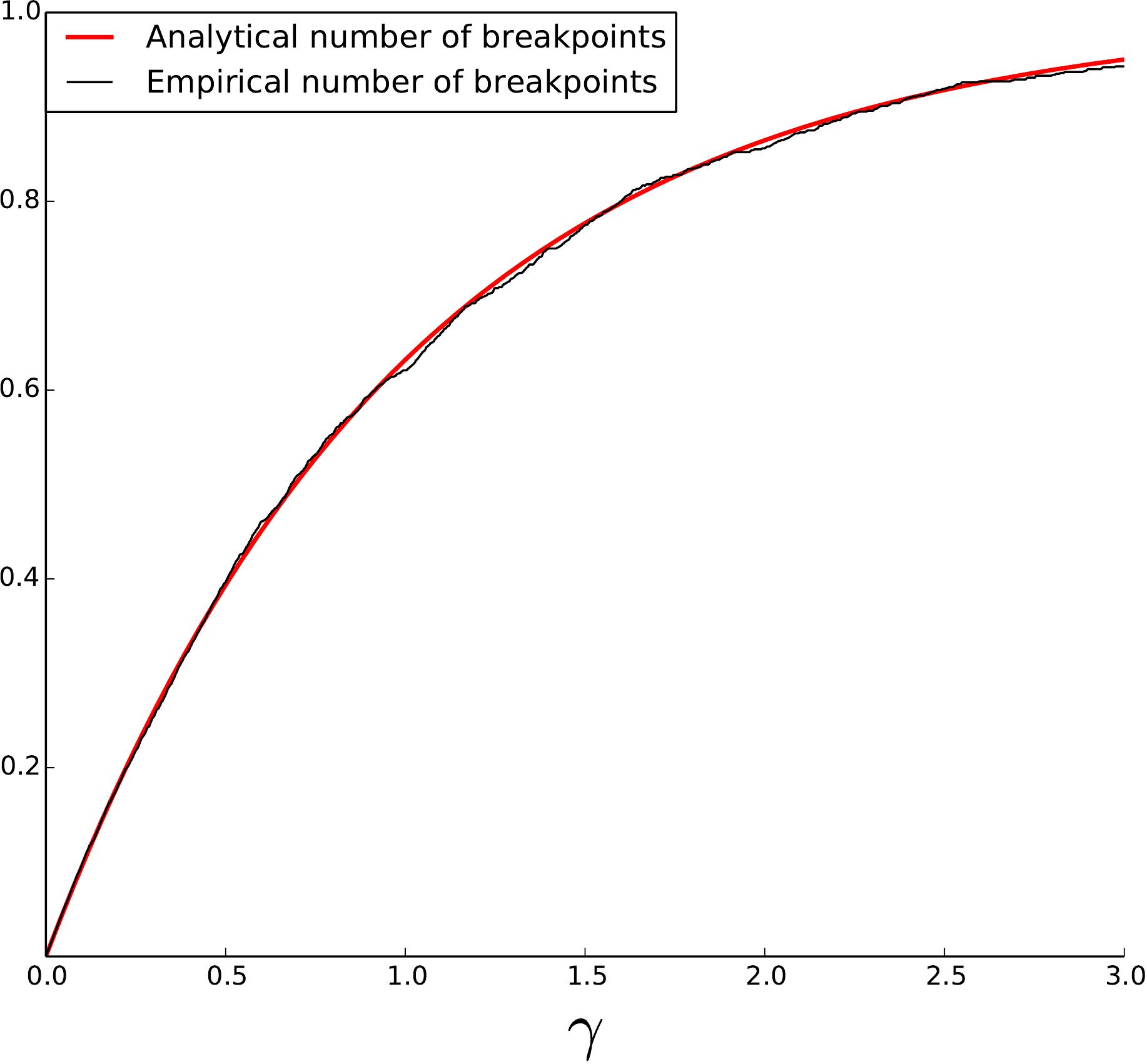}
\end{tabular}
\end{tabular}
\caption{\textbf{(a)} Empirical and analytical curves for the fraction $\frac{d}{n}$ as a function of $\gamma$. 
\textbf{(b)} Empirical and analytical curves for the fraction $\frac{b}{n}$ as a function of $\gamma$.}
\label{fig:d2br}
\end{center}
\end{figure} 

\subsection*{Estimation Algorithm}
\label{sec:estalg}

From Corollaries~\ref{cor:br} and \ref{cor:d2}, we obtain the following approximation for the ratio $\frac{d}{b}$:
\begin{equation}
\label{doverb}
\frac{d}{b} \approx \frac{1-\sum_{\ell=1}^{\infty}{e^{-\gamma\ell}\frac{(\gamma\ell)^{\ell-1}}{\ell\cdot\ell!}}}{1 - e^{-\gamma}} \, ,
\end{equation}
and then estimate $\gamma$ with the bisection method. 
The plot of $\frac{d}{b}$ as a function of $\gamma$ is shown in Fig.~\ref{fig:dtobcyc6}a. As one can see, this function is increasing, and each value of  $\frac{d}{b}$ in the interval $[0.5,1]$ uniquely determines the value of $\gamma$. In particular, $\gamma =1$ corresponds to $\frac{d}{b} \approx 0.79$. That is, if $\frac{d}{b}<0.79$, then the process is in the parsimony phase and the true distance $k$ is accurately approximated by the DCJ distance $d$. 

Our simulations demonstrate that this estimation method is robust for $\gamma < 2$. For larger values of $\gamma$, the estimator is too sensitive to small random deviations.

\begin{figure}[!t]
\begin{center}
\begin{tabular}{lll}
\textbf{(a)} & ~~~ & \textbf{(b)} \\
\\
\begin{tabular}{c}
\includegraphics[width=0.45\textwidth]{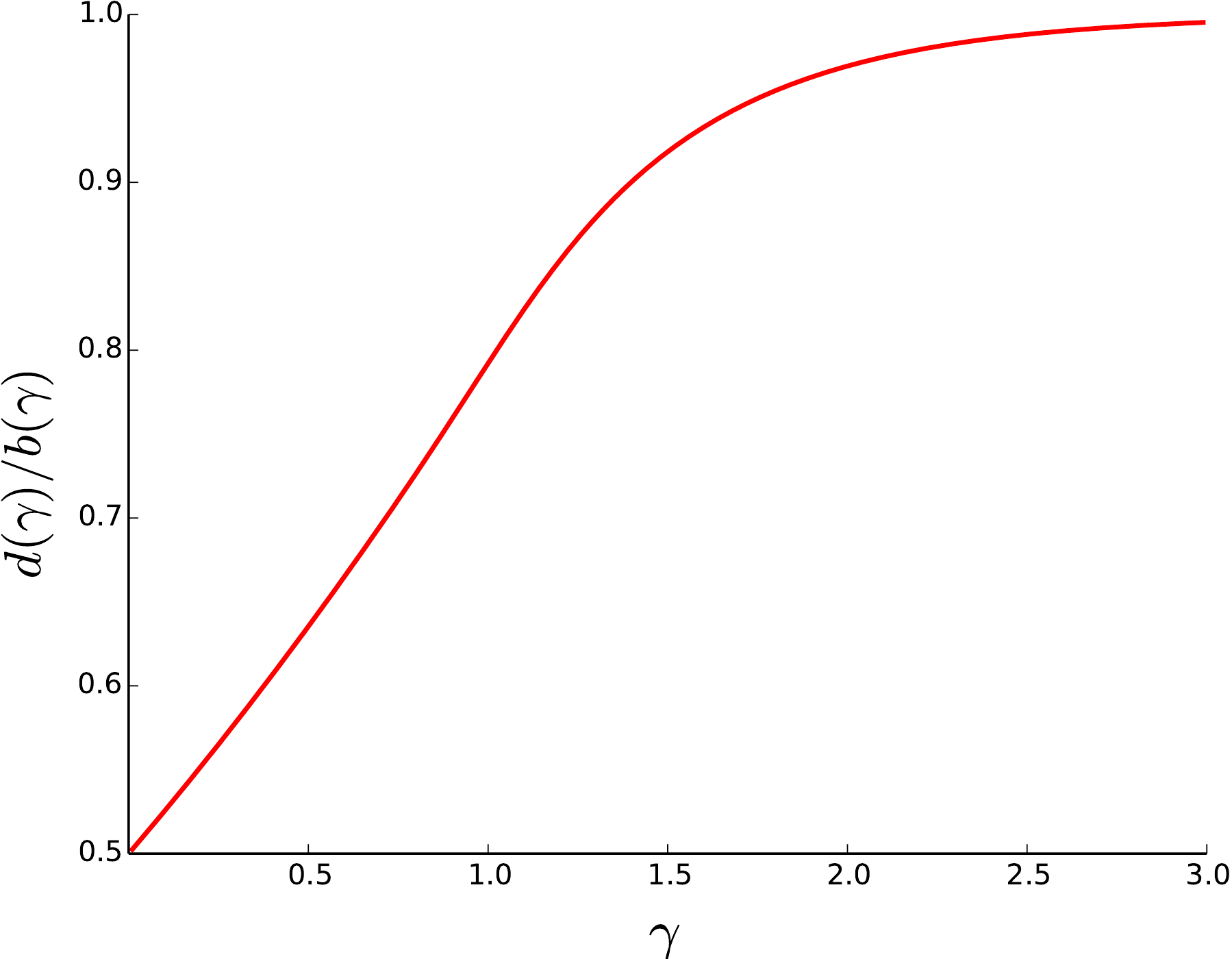}
\end{tabular}
&&
\begin{tabular}{c}
\includegraphics[width=0.45\textwidth]{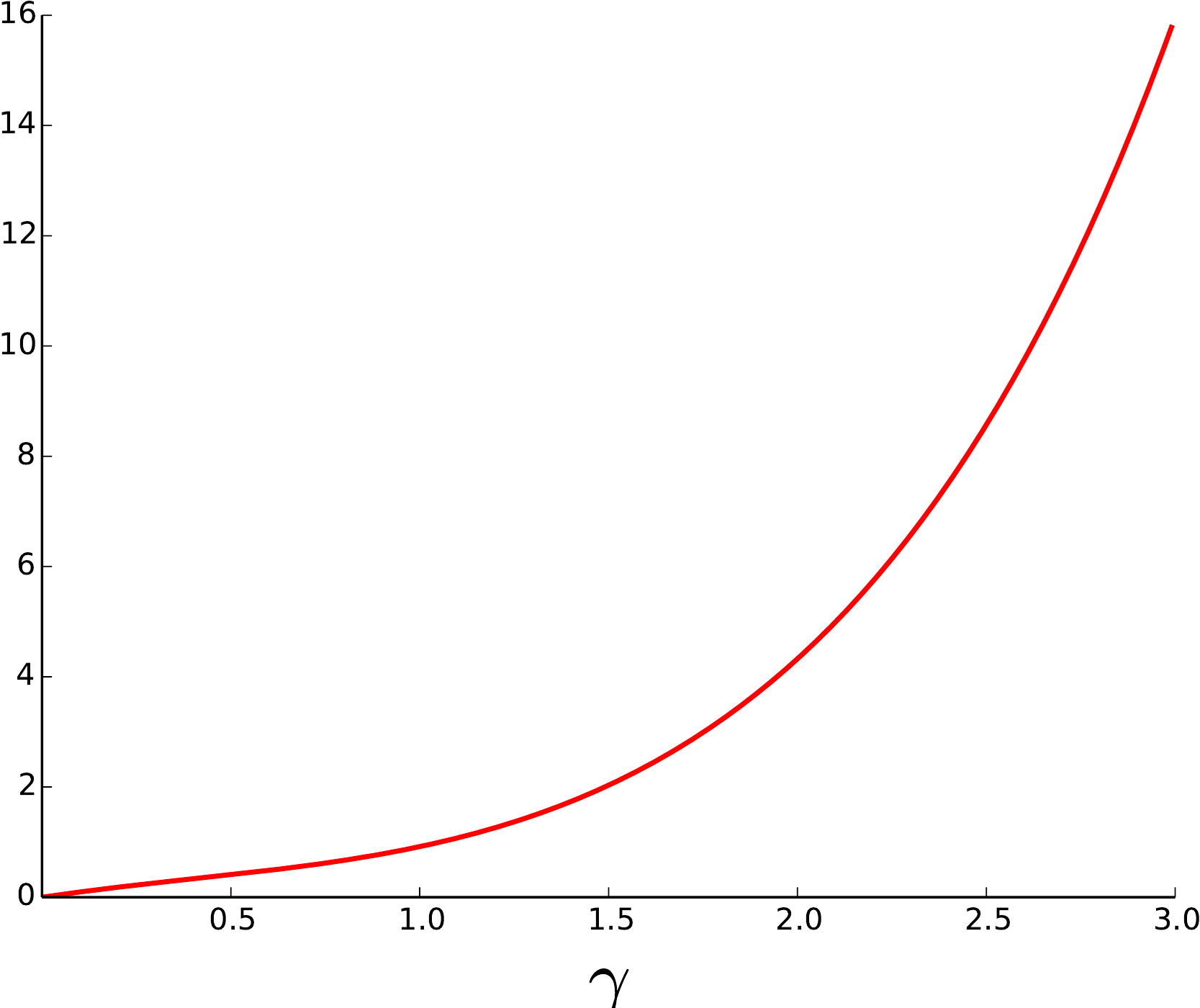}
\end{tabular}
\end{tabular}
\caption{\textbf{(a)} The ratio $\frac{d}{b}$ as a function of $\gamma$. \textbf{(b)} The plot of function $f_6(\gamma)$ defined in \eqref{eq:cyc6}.}
    \label{fig:dtobcyc6}
\end{center}
\end{figure}

Alternatively, the value of $\gamma$ can be approximated by $\frac{b}{\sum_{\ell =2}^{m} \ell c_\ell}$ for some $m$. In our simulations, the best results were observed for $m =6$. Again, the function 

\begin{align}
f_m(\gamma) = \frac{1 - e^{-\gamma}}{\sum_{\ell=2}^{m}e^{-\gamma\ell}\frac{(\gamma\ell)^{\ell-1}}{\ell!}}
\label{eq:cyc6}
\end{align}
is increasing (Fig.~\ref{fig:dtobcyc6}b). The applicability limits for this estimator depend on the value of $n$. 
Namely, if $\sum_{\ell =2}^{m} \ell c_\ell$ is close to zero, then the estimator becomes sensitive to small deviations. 
In our simulations with $n = \num{1000}$ (see \emph{Discussion}), the estimator is very precise for values $\gamma < 2.5$ with the relative error below 10\% in 95\% of observations.

Once we obtain an estimated value $\gamma_e$ for $\gamma$, it is easy to estimate the values of $n$ and $k$ as follows:
\begin{equation}
\label{eq:nkest}
n_{e} = \frac{b}{1-e^{-\gamma_{e}}}\quad\text{and}\quad k_{e} = \frac{\gamma_{e} \cdot n_{e}}2.
\end{equation} 
We report a value of $k_e$ as our estimation for the true evolutionary distance.

\section*{Discussion}
We evaluated the proposed method on simulated genomes, and further applied it for estimation of the evolutionary distances within a set of five yeast genomes and a set of two fish genomes.

\subsection*{Simulated Genomes}

\begin{figure}[!t]
\centering
\centering\includegraphics[width=0.6\textwidth]{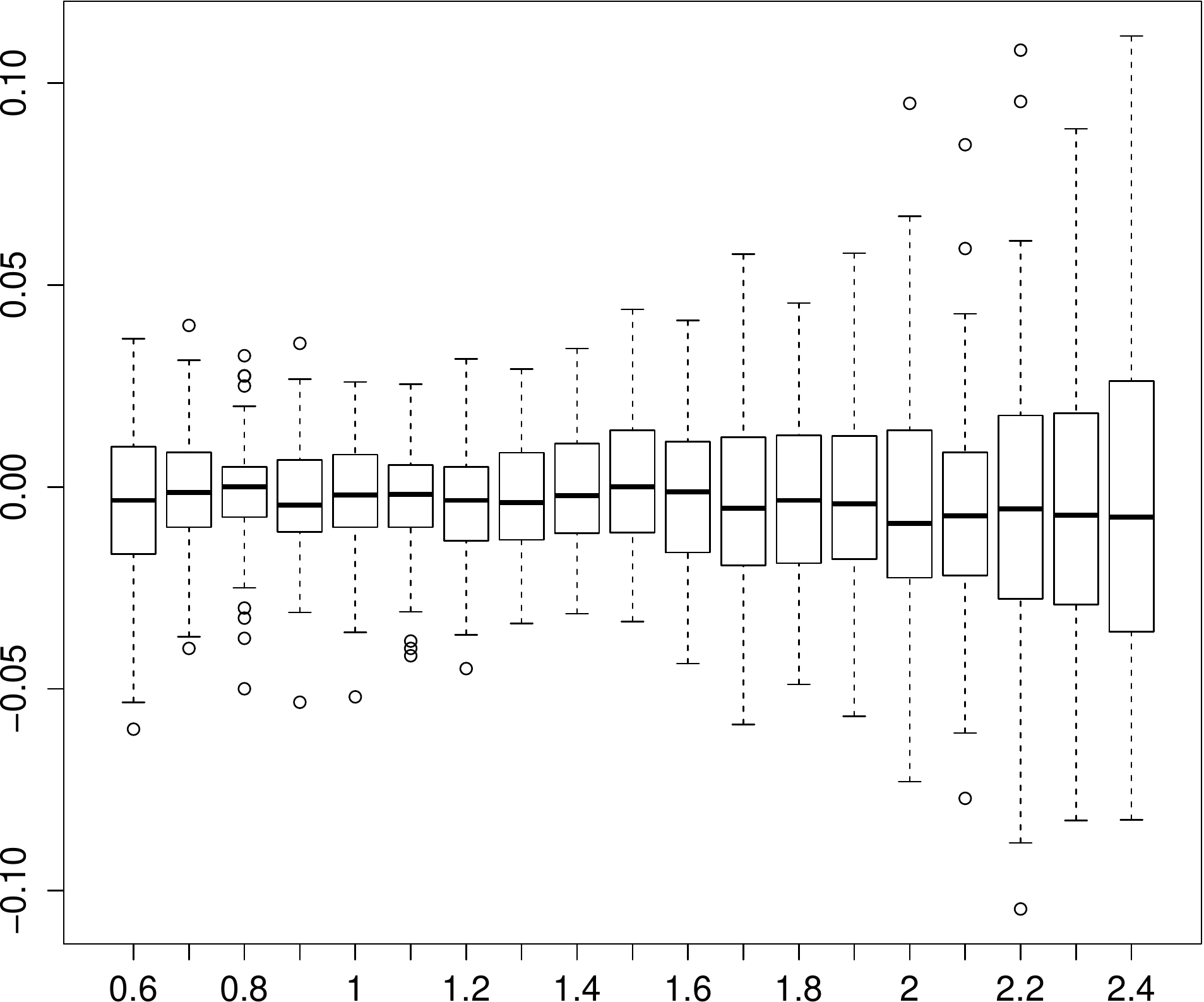}
 \caption{The dependency of distribution of the relative error $\frac{k_{e}-k}{k}$ on $\gamma$.}
 \label{fig:est6}
\end{figure}

We performed a simulation with a fixed number of blocks $n = \num{1000}$ 
and various values of $\gamma$.
In each simulation, we started with a genome $P$ on $n$ blocks and applied a number of DCJs, until we reached the upper value of $\gamma$ (in our case this upper value is 2.5). We denote the resulting genome by $Q$ and estimate $\gamma$, $n$, and $k$ from the genomes $P$ and $Q$. 
First we estimate $\gamma$ by solving the approximate equation
\begin{align}
&\frac{b}{2c_2 +3c_3 +4c_4 +5c_5 +6c_6} \approx f_6(\gamma) \, .
 \label{eq:est6}
\end{align}
Since the function in the r.h.s. is continuous and increasing, a unique solution exists for any value of the l.h.s.
From the estimated value $\gamma_{e}$, we compute $n_{e}$ and $k_{e}$ as in \eqref{eq:nkest}. 

The result of this procedure is shown in Fig.~\ref{fig:est6} with a box-plot diagram of relative error $\frac{k_{e}-k}{k}$ of our estimation for each $\gamma \in [0.6, 2.5]$ with a step $0.1$. As one can see, the relative error increases with the increase of $\gamma$, but even for large values of $\gamma$, the median of the relative error is small (e.g., for $\gamma = 2.5$ the median of the relative error is only $0.0075$); and for all $\gamma$, the interquartile range is less than $0.1$. 

Instead of having $6$ terms in the denominator of \eqref{eq:est6}, one can take a smaller number $m \geq 2$. While the behavior of the sum with a larger $m$ is more stable generally, our simulations showed very close results for all $m$ between $2$ and $10$.

Instead of \eqref{eq:est6}, we also used the approximate equation \eqref{doverb}, which involves only two observable parameters: the DCJ distance $d$ and half the number of breakpoints $b$. While the resulting estimation is quite similar to the previous one, it is more accurate when $\gamma<1.6$ and less accurate $\gamma>2$.

\subsection*{Yeast genomes}\label{sect:yeast}
We analyzed a set of five yeast genomes: \textit{A. gossypii}, \textit{K. lactis}, \textit{K. thermotolerans}, \textit{S. kluyveri}, and \textit{Z. rouxii}, represented 
as sequences of the same $710$ synteny blocks~\cite{Tannier2009}.
For each pair of genomes, we circularized their chromosomes, constructed the breakpoint graph, and independently estimated the true evolutionary distance between them. The results in Table~\ref{tab:yeast} demonstrate that some but not all pairs of yeast genomes fall under the parsimony phase.

\begin{table}[!t]
\caption{For each pair of yeast genomes, $x\,:\,y$ gives the estimated true evolutionary distance $x$ and the rearrangement distance $y$.}
\label{tab:yeast} 
\centering
\begin{tabular}{| c | c | c | c | c | }
\hline
 &   \emph{K. lactis} &    \emph{K. thermotolerans} & \emph{S. kluyveri} & \emph{Z. rouxii}\\
  \hline
  \emph{A. gossypii}  &  375 : 359 & 260 : 249 & 228 : 217 &  336 : 319\\
  \hline
 \emph{K. lactis}  &  & 280 : 272 & 253 : 240 & 352 : 344 \\
  \hline
   \emph{K. thermotolerans}   &   &  & 75 : 75 & 198 : 198  \\
  \hline
  \emph{S. kluyveri}   &  &   &  & 162 : 162 \\
  \hline
\end{tabular}
\end{table}

\subsection*{Fish genomes}\label{sect:fish}
We also analyzed two fish genomes: \textit{Tetraodon nigroviridis}~\cite{jaillon2004} and \textit{Gasterosteus aculeatus}~\cite{jones2012} represented as sequences of the same $\num{6132}$ genes.
Our estimation for the true evolutionary distance between these genomes shows that they do not fall under the parsimony phase:
their rearrangement distance equals $\num{3705}$, while the true evolutionary distance is about $\num{4500}$.

\section*{Conclusions}
In the current study, we address the problem of estimating the true evolutionary distance between two genomes under the fragile breakage model. 
Similarly to our previous study on estimation of the proportion of evolutionary transpositions \cite{alexeev2015transposition}, 
we model evolution as a sequence of random DCJs and track how these random DCJs change the cycle structure of the breakpoint graph. 
We show that, while the number of DCJs is less than half the number of fragile regions in the genome, the parsimony assumption holds, 
and in this case we prove that lengths of alternating cycles are distributed according to a Borel distribution. 
In this sense our process in the parsimony phase is closely related to the evolution of a random Erd{\"o}s--R{\'e}nyi graph~\cite{erdos1960} in  the subcritical regime.
A similar process was also analyzed by Berestycki and Durrett~\cite{berestycki2006}. 
They studied the cycle structure of a permutation obtained from the identity permutation with a sequence of random algebraic transpositions. Our results are consistent with their findings.

We provide estimators for the true evolutionary distance, which show high accuracy on simulated genomes. Our analysis of five yeast genomes shows that some but not all genome pairs fall under the parsimony phase. We also analyzed two fish genomes and revealed that the rearrangement distance between them underestimate the true evolutionary distance by about $20\%$. This data show how drastically the two distances can differ and emphasize the importance of using the evolutionary (rather than rearrangement) distance in comparative genomics.
In contrast to the method of Lin and Moret~\cite{Lin08}, our method does not rely on the number of common gene adjacencies across two given genomes. Since some genes can form conserved clusters~\cite{spring2015}, treating all gene adjacencies as fragile regions can lead to a huge bias in the estimation. 
Our estimator is based only on breakpoints (but not on conserved adjacencies) across two genomes, and so it avoids this issue. 

We further remark that our method is based on accurate estimation of the ratio $\frac{d}{b}$. At the same time, 
in comparative genomics studies the ratio $\frac{2d}{b}$ is known as the \emph{breakpoint reuse rate}~\cite{PevznerTeslerPNAS03,Attie2011}, 
representing the average number of breakpoint ``uses''  by rearrangements in the course of evolution between two genomes. 
Our parsimony phase condition can therefore be restated as the breakpoint reuse rate being below $1.58$. 
We argue however that the conventional definition of the breakpoint reuse rate does not accurately capture its meaning, 
in the same way how the rearrangement distance does not quite capture the meaning of evolutionary distance. 
This inspires a notion of the \emph{true breakpoint reuse rate} defined as $\frac{2k}{b}$ (i.e., with the true evolutionary instead of rearrangement distance in the numerator), 
which can be easily estimated with our method.

In further development of our method, we plan to address the even more accurate turnover fragile breakage model~\cite{Alekseyev10b}, where the set of fragile regions changes with time. We believe this model can better explain the cycle structure of the breakpoint graphs of real genomes (for a recent discussion of associated combinatorial aspects, see~\cite{AlexeevJCB2016}).

\bibliographystyle{bmc-mathphys} % Style BST file (bmc-mathphys, vancouver, spbasic).
\bibliography{true_dist.bib}     % Bibliography file (usually '*.bib' )

\end{document}